\newtheorem{theorem}{Theorem}
\begin{document}

\title{Optimal Mode Selection in D2D-Enabled Multi-Base Station Systems}
\author{Yuan~Liu,~\IEEEmembership{Member,~IEEE}
%\thanks{Manuscript received 11 September, 2015; accepted 20 December, 2015. This work was supported in part by the Natural Science Foundation of China under Grant 61401159, in part by the open research fund of Key Lab of Broadband Wireless Communication and Sensor Network Technology (Nanjing University of Posts and Telecommunications), Ministry of Education, under Grant NYKL201501, and in part by the Fundamental Research Funds for the Central Universities. The associate editor coordinating the review of this manuscript and approving it for publication was A. Lioumpas.}
\thanks{Y. Liu is with the School of Electronic and Information Engineering, South China University of Technology, Guangzhou 510641, China (email: eeyliu@scut.edu.cn).}
}

\maketitle

\begin{abstract}
%Device-to-device (D2D) communication can increase network throughput by spatial reuse of cellular resources (spectrum efficiency) and prolong the battery life of devices by reducing transmit power (energy efficiency).
In this paper, we consider device-to-device (D2D) communication underlaying uplink cellular networks with multiple base stations (BSs), where each user can switch between traditional cellular mode (through BS) and D2D mode (by connecting proximity user), namely mode selection. We impose load balancing constraints on BSs to efficient resource usage. The joint problem of mode selection and user association is formulated as a combinatorial problem and NP-complete. We adopt a graph-based approach to solve the problem globally optimally in polynomial time. To further reduce complexity, we also propose a distributed algorithm based on dual method. We show that the proposed distributed algorithm achieves nearly the same performance as the proposed optimal graph based algorithm.
\end{abstract}

\begin{keywords}
Device-to-device (D2D) communication, mode selection, load balancing, graph theory, distributed algorithm.
\end{keywords}

\section{Introduction}

The explosive growth of mobile users and their traffic requirements force researchers to seek new paradigms to revolutionize the traditional cellular networks. Device-to-device (D2D) communication recently has emerged as a promising technique for improving system performance. The idea is to enable two mobile devices in proximity of each other to establish a direct connection and to bypass the base station (BS). By incorporating D2D communication into cellular networks, mobile users can operate on \emph{D2D mode} (i.e, two users establish a direct local link) or traditional \emph{cellular mode} (i.e., two users communicate through BS), which may be controlled by either network or users. D2D communication gains many benefits, like spectrum/energy efficiency improvement, coverage extension, and cellular offloading.
%Recently, D2D has been standardized by the Third Generation Partnership Project (3GPP) Long Term Evolution Advanced (LTE-A) \cite{3gpp1,3gpp2}.

However, D2D communication poses a set of new technical challenges, such as interference management and mode selection which are quite different from those of traditional cellular networks. There are a number of works for interference management by resource allocations \cite{Zulhasnine2010,Yu2011,Min2011,Xu2013,Feng2013,YuanICC2014}.
%For instance, authors in \cite{Zulhasnine2010} studied the channel assignment problem and proposed a greedy heuristic algorithm. In \cite{Yu2011}, power allocation was studied to maintain minimum and maximum signal-to-interference-plus-noise ratio (SINR) constraints of users. In \cite{Min2011}, authors proposed a predefined interference limited area where D2D users do not transmit for avoiding strong interference to  cellular users. Channel reuse was investigated by a combinatorial auction approach in \cite{Xu2013}. A three-step resource allocation scheme was proposed in \cite{Feng2013} to coordinate interference.   Two pricing algorithms for distributed power allocation of D2D users were proposed in \cite{YuanICC2014} to guarantee that the aggregate interference from D2D users at the BS is kept below a threshold.
%
To take full advantage of D2D communication, each user needs to carefully switch between cellular mode and D2D mode. This is known as mode selection which is the unique problem in D2D-enabled cellular networks. In \cite{DopplerWCNC2010}, a mode selection procedure was proposed to limit interference caused by D2D communication.
%The authors in \cite{Hakola} designed mode selection taking into account the network information, such as link gains, noise levels and SINR. A coalition game was proposed in \cite{Akkarajitsakul} for mode selection.
In \cite{XiangGCW2012}, a distance-dependent mode selection algorithm with power optimization was analyzed. Joint mode selection and channel assignment based on stochastic framework was considered in \cite{Han2012}. Joint mode selection and power allocation was studied in \cite{Jung2012} by using exhaustive search all possible mode combinations of users. The authors in \cite{Chien2012} studied joint mode selection and resource allocation as as a mixed integer non-linear programming (MINLP) problem. Heuristic algorithms were proposed for joint mode selection, channel and power allocation in \cite{Guanding2014,Chenfei2014}.

In view of the related works based on mode selection \cite{DopplerWCNC2010,XiangGCW2012,Han2012,Jung2012,Chien2012,Guanding2014,Chenfei2014},
it is found that the previous efficient solutions for the resource allocation problems are always not optimal due to the combinational nature of mode selection. Additionally, most of the related works considered single-BS case. If with multiple BSs, the problem becomes more complex since it involves BS selection which tightly couples mode selection. Moreover, if with multiple BSs, load balancing is crucial for efficient and fair utilization of network resources, which also has not been considered in the D2D-enabled cellular networks yet.
%Last but not least, for uplink communication, most previous works resort to centralized resource allocation and scheduling

Motivated by the above understanding, in this paper, we consider D2D communication underlaying \emph{uplink} cellular network assisted by multiple BSs. The considered system model is general and can be specified as a multicell scenario with multiple macro BSs or a small cell scenario where a macro BS coexists with multiple small BSs.
In the network, each user can switch between cellular mode and D2D mode. The main contributions of this paper are summarized as follows:
%
%\begin{itemize}
    We formulate the joint optimization problem of mode selection and BS selection in D2D-enabled multi-BS cellular networks. This joint problem has not been considered in the literature.
    %The distinct feature about our problem formulation is that we impose load balancing constraints at multiple BSs for efficient resource usage and balanced traffic load.
    The studied problem is a combinatorial problem and NP-complete. We adopt a centralized graph approach to solve the problem \emph{globally optimally} in polynomial time.
   To reduce the complexity and signaling overhead, we further propose a distributed algorithm which solves the original problem in the dual domain via Lagrangian dual decomposition. Simulation results show that the distributed algorithm performs very closely to the centralized algorithm.
%\end{itemize}

%The remainder of the paper is organized as follows. Section II describes the system model and the problem formulation. Section III presents the optimal centralized algorithm by graph theory. Two distributed algorithms are detailed in Section IV. Extensive simulations are provided in Section V. Finally, Section VI concludes the paper.

\section{System Model and Problem Formulation}

%%
%\begin{figure}[t]
%\begin{centering}
%\includegraphics[scale=1]{system.eps}
%\vspace{-0.1cm}
% \caption{An example of system model with two BSs and two users. The solid and dash lines represent possible communication links and interference links, respectively. }\label{fig:system}
%\end{centering}
%\vspace{-0.3cm}
%\end{figure}
%%

We consider a D2D-enabled cellular network consisting of $M$ users and $N$ BSs. The considered multi-BS system model is general, which can be a multicell with multiple macro BSs or a small cell where a macro BS coexists with multiple small BSs (i.e., heterogeneous cellular networks). $\mathcal U=\{1,\cdots,M\}$ is the set of users and $\mathcal B=\{1,\cdots,N\}$ is the set of BSs. Each user consists of one dedicated transmitter and one dedicated receiver. In the following, we use ``user", ``transmitter", and ``receiver" interchangeably for convenience. Each user $i$  can associate with one BS by cellular mode or its own receiver by D2D mode. For notational convenience, we define $\mathcal B^+\triangleq\{0,1,\cdots,N\}$. That is to say, if a user's associated BS is $0$, the user does not connect any BS and operates on D2D mode. In this paper we focus on \emph{uplink} communication.
%
%The wireless channel is modeled by small-scale fading, shadowing, and path loss. We assume that the transmission is based on slotted basis, and the channel gain remains unchanged in a transmission slot but may vary from one slot to another.
To simplify the problem, we do not consider power control and assume that the transmit power of users are fixed.

Let $g_{ij}$ denote the channel gain between user $i$'s transmitter and BS $j$, $h_{ij}$ the channel gain from transmitter $i$ to receiver $j$, $p_i$ the transmit power of transmitter $i$, and $n_0$ the power of the background noise. In this paper, we consider the Rayleigh block fading where the fading is assumed to occur identically (with Rayleigh distribution) and independently from one block to another but can be considered constant within each block.
Each user $i$'s performance is characterized by a utility function $u_i(\gamma_{ij})$, which is a function of the received signal-to-interference plus noise ratio (SINR) $\gamma_{ij}$ that depends on its association $\mathcal B^+$:
\begin{eqnarray}\label{eqn:sinr}
  \begin{cases}\gamma_{ij}=\frac{p_ig_{ij}}{\sum\limits_{k\in\mathcal U, k\neq i}p_kg_{kj}+n_0},~j\in\mathcal B,~{\rm for~ cellular~mode}\\
  \gamma_{ij}=\frac{p_ih_{ii}}{\sum\limits_{k\in\mathcal U, k\neq i}p_kh_{ki}+n_0},~j=0,~{\rm for~ D2D~mode}.\end{cases}
\end{eqnarray}
%

%Note that in general $h_{ij}\neq h_{ji}$ since the latter represents the channel gain from user $j$'s transmitter to user $i$'s receiver.
The utility $u_i(\gamma_{ij})$ can be defined as many popular metrics, such as rate $u_i(\gamma_{ij})=R_i(\gamma_{ij})$, weighted proportional fairness $u_i(\gamma_{ij})=w_i\log R_i(\gamma_{ij})$, and energy efficiency $u_i(\gamma_{ij})=\frac{R_i(\gamma_{ij})}{p_i}$, where $R_i$ is user $i$'s rate.
Since the transmit power of users are assumed to be fixed, all values of SINRs are deterministic through \eqref{eqn:sinr}, and so are the utilities. Thus different utility functions do not affect the proposed algorithms.

%We illustrate an example of two users and two BSs in Fig. \ref{fig:system}: If Tx 2 associates with BS 2 then the links of Tx 2 to BS 1 and Tx 1 to BS 2 are interference; If Tx 2 associates with BS 1 then the links of Tx 1 to BS 1 and Tx 2 to BS 2 is interference; Moreover, if Tx 2 connects Rx 2 then the link of Tx 1 to Rx 2 is interference.

We denote $b_j$ as the maximum number of users that BS $j$ can support, or referred as to \emph{effective load}, and we assume that $b_j$ is a constant in each transmission slot. The operator adjusts $b_j$ to guarantee a certain degree of load balancing.
%\footnote{ Notice that there is no single figure of metric to measure the performance of load balancing. The considered load balancing metric is practically valid if the users have full buffers so that the load on a BS is directly proportional to the number of users associated with it.}
We let $\sum_{j\in\mathcal B}b_j=B$ and in general $B\leq M$ in D2D-enabled cellular network.

Each user can switch between cellular mode and D2D mode. Let $x_{ij}$ be the binary association variable with $x_{ij}=1$ indicating the connection of user $i$ to BS $j$ and $x_{ij}=0$ otherwise, $\forall i\in\mathcal U$, $\forall j\in\mathcal B^+$. Our goal is to determine the joint of user association and mode selection that maximizes the overall network utility. Mathematically, the problem can be formulated as
\begin{subequations}\label{eqn:p1}
\begin{align}
\textbf{P1}:~~\max_{\mathbf x}~&\sum_{i\in\mathcal U}\sum_{j\in\mathcal B^+}x_{ij}u_i(\gamma_{ij}) \\
{\rm s.t.}~~ &\sum_{j\in\mathcal B^+}x_{ij}\leq1,~~\forall i\in\mathcal U \label{eqn:x} \\
&\sum_{i\in\mathcal U}x_{ij}\leq b_j,~~\forall j\in\mathcal B \label{eqn:bj} \\
& x_{ij}\in\{0,1\},~~\forall i\in\mathcal U,\forall j\in\mathcal B^+, \label{eqn:binary}
\end{align}
\end{subequations}
where $\mathbf x\triangleq \{x_{ij}\}$; the constraint \eqref{eqn:x} states that each user can associate with at most one BS; the constraint \eqref{eqn:bj} ensures load balancing for BSs.

%\begin{remark}
%If without the load constraint \eqref{eqn:bj}, the problem becomes simple and the optimal solution follows the traditional max-utility association, i.e., each user simply associates with the strongest SINR over $\mathcal B^+$. On the other hand, if the network exists only one BS, the solution is also trivial.  That is, each user selects cellular or D2D mode by max-utility association, then the BS selects at most $b_j$ associated users with higher utilities and offloads other associated users into D2D mode.  Therefore, the load constraint \eqref{eqn:bj} imposed at multiple BSs significantly complicates the problem.
%\end{remark}

\section{Centralized Algorithm}

\textbf{P1} is an integer programming problem and NP-complete due to the combinational nature \cite{West}.
%Finding its optimal solution needs exhaustive search or branch-and-bound method and results in exponential complexity.
In this section, we propose a graph theoretical  approach to solve the problem \emph{globally optimally} in polynomial time.

Before beginning, we assume that a central controller is available so that the centralized resource allocation can be employed and the full network information can be perfectly gathered at the central controller. This is possible since the central controller can be embedded at a server in the core network for macrocells, e.g., radio network controller (RNC) which executes the task of resource management in universal mobile telecommunications system (UMTS).

Then, we show that \textbf{P1} is equivalent to a maximum weighted bipartite matching (MWBM) problem. To this end, we first briefly review some preliminaries of MWBM in graph theory \cite{West}.
%
%\begin{itemize}
   \emph{Bipartite graph:} A graph whose vertices are partitioned into two disjoint groups so that every edge connects a vertex in one group to one in another. If each edge is assigned a weight, the graph is the \emph{weighted bipartite graph}.
   \emph{Matching:} A set of mutually disjoint edges, i.e., any two edges do not share a common vertex. If every vertex in the graph is included in the matching, the matching is a \emph{perfect matching}.
   \emph{MWBM problem:} It is to find an optimal matching in a bipartite graph such that the sum weights of the matching is maximum.
%\end{itemize}

\begin{theorem}\label{thm:graph}
The problem \textbf{P1} is equivalent to a MWBM problem.
\end{theorem}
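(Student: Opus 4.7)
The plan is to construct an explicit weighted bipartite graph $G$ whose maximum weight matchings are in one-to-one correspondence with the feasible solutions of \textbf{P1}, with matching edge-weight sum equal to the objective value. The key design choice is how to encode the BS load constraint \eqref{eqn:bj} and the structurally different D2D mode ($j=0$), since D2D does not share capacity across users while cellular BSs do.

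I would first define the vertex sets. Let $V_1 = \mathcal{U}$ be the set of users. On the other side, I would replicate each BS $j \in \mathcal{B}$ exactly $b_j$ times, creating slot-vertices $\{s_{j,1}, \dots, s_{j,b_j}\}$, and I would add one dedicated D2D vertex $d_i$ per user $i$. Thus $V_2 = \bigcup_{j \in \mathcal{B}} \{s_{j,k}\}_{k=1}^{b_j} \;\cup\; \{d_i\}_{i \in \mathcal{U}}$. The edges are: for every user $i$ and every slot $s_{j,k}$ of BS $j$, an edge of weight $u_i(\gamma_{ij})$; and for every user $i$, an edge $(i,d_i)$ of weight $u_i(\gamma_{i0})$. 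Note that $d_i$ is connected only to user $i$, so assigning $i$ to D2D consumes only $i$'s private resource, matching the absence of a capacity constraint on mode $j=0$.

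Next I would establish the correspondence. Given any matching $\mathcal{M}$ in $G$, define $x_{ij}=1$ iff $\mathcal{M}$ contains an edge from $i$ to some slot of BS $j$, and $x_{i0}=1$ iff $(i,d_i)\in\mathcal{M}$. The matching property that each vertex in $V_1$ is incident to at most one edge gives \eqref{eqn:x}; the fact that BS $j$ contributes only $b_j$ slots to $V_2$ and slots are matched at most once gives \eqref{eqn:bj}. Conversely, given a feasible $\mathbf{x}$, for each $i$ with $x_{i0}=1$ include $(i,d_i)$ in $\mathcal{M}$, and for each BS $j$ the at most $b_j$ users with $x_{ij}=1$ can be injected into the $b_j$ slots of $j$ in any order. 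In both directions, the sum of edge weights of $\mathcal{M}$ equals $\sum_{i,j} x_{ij} u_i(\gamma_{ij})$, so the two optimization problems have the same optimal value and a MWBM in $G$ yields an optimal $\mathbf{x}$ for \textbf{P1}.

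The main obstacle, I expect, is not the forward construction but verifying that the encoding is lossless in both directions, particularly around two subtleties: (i) the asymmetry between cellular mode (shared capacity, handled by slot replication) and D2D mode (private per-user vertex), which must be argued to reproduce all and only the constraints of \textbf{P1}; and (ii) the fact that MWBM does not require a perfect matching, which is essential because constraint \eqref{eqn:x} is an inequality (some users may be left unassigned) and some BS slots may be unused when $B\le M$ but infeasible elsewhere. Once these points are addressed, the equivalence follows directly and the polynomial-time solvability via classical MWBM algorithms (e.g., the Hungarian method on $G$, whose size is polynomial in $M$ and $B$) is immediate.
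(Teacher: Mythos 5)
Your proposal is correct and follows essentially the same construction as the paper: duplicate each BS $j$ into $b_j$ slot vertices, add a dedicated receiver vertex per user to encode D2D mode, weight each edge by $u_i(\gamma_{ij})$, and read off the correspondence between matchings and feasible $\mathbf x$. The only cosmetic difference is that the paper additionally pads $\mathcal V_1$ with zero-weight virtual vertices so that $|\mathcal V_1|=|\mathcal V_2|=M+B$ for the Hungarian algorithm, whereas you work directly with non-perfect matchings; both handle the inequality constraints correctly.
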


\begin{proof}
Let $\mathcal G\triangleq(\mathcal V_1\times \mathcal V_2,\mathcal E,\mathcal W)$ be the weighted bipartite graph, where $\mathcal V_1$ and $\mathcal V_2$ are the two disjoint groups of vertices; $\mathcal E$ is the set of edges; $\mathcal W$ is the weighting function such that $\mathcal W:\mathcal E\rightarrow \mathbb{R}_+$. As depicted in Fig. \ref{fig:CL2015-2583-graph2}, we first let the transmitters be the vertices and put them in $\mathcal V_1$, and the receivers and BSs be the vertices in $\mathcal V_2$. In addition, we duplicate the vertex corresponding to each BS $j$ with $b_j$ identical vertices in $\mathcal V_2$. To make the graph balanced, some virtual vertices (the dotted circles in Fig. \ref{fig:CL2015-2583-graph2}) are added in $\mathcal V_1$ so that $|\mathcal V_1|=|\mathcal V_2|=M+B$, where $|\cdot|$ denotes the cardinality of a set. Each transmitter in $\mathcal V_1$ connects all possible (duplicated) BSs and its dedicated receiver in $\mathcal V_2$ by edges. Moreover, for each receiver in $\mathcal V_2$, we establish edges to connect it with the virtual vertices.   For every edge connecting the virtual vertices, its weight is zero.
%Moreover, for the edges not connecting the dedicated transmitter-receiver pairs, the weights are assigned to be zero.
That it, only the edges connecting each transmitter $i$ to the BSs and its dedicated receiver $j\in\mathcal B^+$ are assigned positive weights which are defined as the utility $u_i(\gamma_{ij})$. If a transmitter connects the duplicated $b_j$ BSs, the weights are the same as the connection to BS $j$.
%Note that in an optimal matching, the case of a dedicated transmitter connecting another receiver does not happen. This is because if a dedicated transmitter connects another receiver (with zero weight), the sum weights can be increased if the dedicated transmitter connects its dedicated receiver (with positive weight), according to the defined weighting process.

Based on our graph construction method in above, we find the following equivalences: (i) The concept of a matching implies no violating the exclusive association constraints \eqref{eqn:x} and \eqref{eqn:binary}; (ii) The duplicated vertices in a matching satisfies the load balancing constraint \eqref{eqn:bj}; (iii) The weight of each edge is actually the utility of an association. Consequently, the problem in \textbf{P1} is exactly equivalent to the MWBM problem that is to find an optimal matching $\mathcal F^*\in\mathcal E$ so that the sum weights of $\mathcal F^*$ is maximum.
This completes the proof.
\end{proof}

\begin{figure}[t]
\begin{centering}
\includegraphics[scale=0.6]{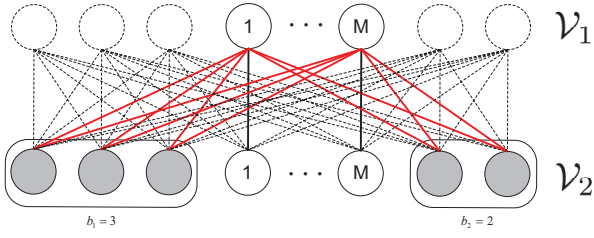}
\vspace{-0.1cm}
 \caption{An example of the bipartite graph with $M$ users and two BSs ($b_1=3$ and $b_2=2$). The weights of dash edges are zero.}\label{fig:CL2015-2583-graph2}
\end{centering}
\vspace{-0.3cm}
\end{figure}

After mapping \textbf{P1} to the MWBM problem, the well-known Hungary algorithm \cite{West} can solve the problem optimally with complexity of $\mathcal O((M+B)^{3})$, which is polynomial in the system parameters $M$ and $B$.

So far the proposed graph-based approach solved \textbf{P1} globally optimally, but it needs a central controller for centralized processing.
%Moreover, the complexity will be high if the user number $M$ and sum loads of BSs $B$ go to large.
Nevertheless, the proposed centralized algorithm gives a performance upper bound and provides a simulation benchmark for the proposed distributed algorithms in next section.

\section{Distributed Algorithm}\label{sec:distributed}

%In this section, we propose two distributed algorithms, where network nodes only need local information for individual strategies.

%\subsection{Dual Based Algorithm}
In this section, we propose an efficiently suboptimal solution for \textbf{P1}, which works in a totally distributed manner. The idea is to solve the primal problem in dual domain via Lagrangian dual decomposition.

Let $\lambda_j$ denote the Lagrange multiplier corresponding to the load balancing constraint \eqref{eqn:bj} and $\boldsymbol\lambda\triangleq\{\lambda_j\}$. The Lagrangian function of of \textbf{P1} can be written as
\begin{equation}\label{eqn:La}
L(\mathbf x, \boldsymbol \lambda)=\sum_{i\in\mathcal U}\sum_{j\in\mathcal B^+}x_{ij}u_i(\gamma_{ij})+\sum_{j\in\mathcal B}\lambda_j\left(b_j-\sum_{i\in\mathcal U}x_{ij}\right).
\end{equation}
The dual function is then
\begin{eqnarray}\label{eqn:dual}
g(\boldsymbol\lambda)\triangleq\begin{cases}\max\limits_{\mathbf x}&L(\mathbf x, \boldsymbol\lambda)\\
{\rm s.t.} &\sum\limits_{j\in\mathcal B^+}x_{ij}\leq1,~~\forall i\in\mathcal U\\
&x_{ij}\in\{0,1\},~~\forall i\in\mathcal U,\forall j\in\mathcal B^+.
\end{cases}
\end{eqnarray}
The dual problem of \textbf{P1} is given by
%
%\begin{subequations}\label{eqn:ming}
%\begin{align}
%\min_{\boldsymbol\lambda}~&g(\boldsymbol\lambda) \\
%{\rm s.t.}~~ &\boldsymbol\lambda\succcurlyeq0.
%\end{align}
%\end{subequations}

\begin{eqnarray}\label{eqn:ming}
\begin{cases}  \min_{\boldsymbol\lambda}&g(\boldsymbol\lambda) \\
{\rm s.t.}~ &\boldsymbol\lambda\succcurlyeq0.
\end{cases}
\end{eqnarray}

Define $\lambda_0\equiv0$ for $j=0$ corresponding to D2D mode, maximizing $L(\mathbf x, \boldsymbol \lambda)$ is equivalent to maximizing the following objective:
\begin{equation}
\tilde{L}(\mathbf x, \boldsymbol \lambda)=\sum_{i\in\mathcal U}\sum_{j\in\mathcal B^+}x_{ij}u_i(\gamma_{ij})-\sum_{j\in\mathcal B^+}\lambda_j\sum_{i\in\mathcal U}x_{ij}.
\end{equation}
The above Lagrangian can be decomposed into $M$ sub-Lagrangian, and each sub-Lagrangian corresponds to one user and is expressed as
\begin{equation}\label{eqn:Li}
L_i(x_{ij})=\sum_{j\in\mathcal B^+}x_{ij}\left(u_i(\gamma_{ij})-\lambda_j\right).
\end{equation}
To maximize each sub-Lagrangian $L_i(x_{ij})$, the optimal $x_{ij}^*$ for each user $i$ can be determined by
\begin{eqnarray}\label{eqn:x-opt}
x_{ij}^*=\begin{cases}
  1,&{\rm if}~j=j^*=\arg\max\limits_{j\in\mathcal B^+} ~u_i(\gamma_{ij})-\lambda_j \\
  0,&{\rm otherwise}.
\end{cases}
\end{eqnarray}

The above analysis yields considerable insights as follows: The Lagrange multiplier $\lambda_j$, $j\in\mathcal B$, can be interpreted as the price announced by BS $j$ regarding on its loads. Then $u_i(\gamma_{ij})-\lambda_j$ can be treated as user $i$'s surplus, which is the difference between the utility by associating BS $j$ and its payment $\lambda_j$. If $j=0$ and $\lambda_0=0$, user $i$ operates on D2D mode and its surplus is just the utility (without payment), which coincides with the common sense of economics.

For the dual problem \eqref{eqn:ming}, the Lagrange multiplier $\lambda_j$ can be individually updated by each BS $j$ according to the following rule:
\begin{equation}\label{eqn:subgradient}
\lambda_j^{(t+1)}=\left[\lambda_j^{(t)}-\epsilon\left(b_j-\sum_{i\in\mathcal U}x_{ij}^{(t)}\right)\right]^+,~\forall j\in\mathcal B,
\end{equation}
where $[\cdot]^+\triangleq\max\{\cdot,0\}$ and $\epsilon$ is the properly designed step size. Since the dual problem in \eqref{eqn:dual} is always convex by definition, the subgradient method in \eqref{eqn:subgradient} is guaranteed to converge to the globally optimal solution to the dual problem \eqref{eqn:ming}.

%Finally, we present the distributed algorithm in Algorithm 2.
In the distributed algorithm, every BS and receiver measure the SINRs and send to the transmitters via a feedback channel before the iteration. In each iteration, each user $i$ determines its association according to \eqref{eqn:x-opt} based on local information $\gamma_{ij}$ and the broadcasted information $\lambda_j$ for all $j$; Each BS updates its multiplier according to \eqref{eqn:subgradient} which only needs the local information, and then announces the new multiplier to the users. Thus the proposed dual-based algorithm can be implemented in a totally distributed fashion.
Note that the distributed algorithm incurs less signaling overhead compared to the proposed centralized algorithm. Specifically, each user only needs to send a beacon signal to its most favorable BS (if cellular mode is selected) rather than all BSs; at the BS side, only the price is needed to broadcast.
%{\color{blue}In addition, the complexity of the distributed algorithm mainly depends on the step-size of the dual updating \eqref{eqn:subgradient}. If the diminishing step-size rule is selected, the complexity is $\mathcal O(B^2)$, which is much lower than that of the centralized algorithm $\mathcal O((M+B)^3)$.}

%The complexity of dual updating in \eqref{eqn:subgradient} is $\mathcal O(N^q)$ \cite{Boyd}, where $q$ is a constant. Combining the complexity in \eqref{eqn:x-opt}, the total complexity is $\mathcal O(M(N+1)N^q)$.
%Since $N\ll M$ and $N\ll B$, this significant reduces the complexity compared to Algorithm 1.

%
%\begin{algorithm}[!t]
%\caption{Dual Based Algorithm for \textbf{P1}}
%\begin{algorithmic}[1]
%\STATE \textbf{Initialize} $\boldsymbol\lambda$ as non-negative values.
%\REPEAT \STATE \emph{User Action:} Each user decides its connection via \eqref{eqn:x-opt};
%\STATE \emph{BS Action:} Each BS updates its price via \eqref{eqn:subgradient}.
% \UNTIL{$\boldsymbol\lambda$ converge.}
%\end{algorithmic}
%\end{algorithm}

%
%Note that there are several rules for choosing step size $\epsilon$, like constant step size, square-summable but not summable, and diminishing step size \cite{Boyd}. They have difference convergence speeds (case-dependent) but all guarantee to the global optimum. Dynamically choosing the step size $\epsilon$ in the dual updating process only influences the convergence speed but not the final results, and seems to be out of the scope of this paper.

We note that due to the nonconvexity of \textbf{P1}, the optimal solution obtained in dual domain may not be the same as the primal optimum. This means that the so-called ``duality gap" exists, i.e., the difference between the optimal value of the primal
problem and that of the dual problem is non-zero. However, the proposed dual-based algorithm yields a good solution to the primal problem \textbf{P1} in some sense.

\section{Simulations}

%In this section, we evaluate the proposed algorithms using simulation.

We consider a two-dimensional plan of node locations, where the $N=4$ BSs are fixed at the coordinates $(0.25, 0.25)$, $(-0.25, 0.25)$, $(-0.25,-0.25)$, $(0.25,-0.25)$ of kilometer (km), and the $M=50$ users are uniformly distributed in the square with $1$ km of length of a side. The transmitter-receiver distance of each user is randomly distributed between $0$ and $0.2$ km. We set the path loss exponent for large-scale fading to be $4$ and the standard deviation of lognormal shadowing to be $5.8$ dB. The small-scale fading is modeled by Rayleigh fading process. A total of $2000$ different channel realizations with different locations and transmitter-receiver distances of users are used. Here we use the average throughput per user (total throughput divides the number of users) as the utility function for all users. Without loss of generality, we let the effective loads $b_j=b$ and the utility as the rate $u_i(\gamma_{ij})=R_i(\gamma_{ij})$ for all $i$ and $j$.
%Moreover, we adopt diminishing step size rule for the dual based Algorithm 2 and empirically set the step size to be $\delta=10^{-4}$ for the price based Algorithm 3.

%%
%\begin{figure}[t]
%\begin{centering}
%\includegraphics[scale=.65]{setup.eps}
%\vspace{-0.1cm}
% \caption{Two-dimensional plan of a random user locations where triangles and lines represents BSs and transmitter-receiver pairs, respectively. Here we set $r=200$ for example. }\label{fig:setup}
%\end{centering}
%\vspace{-0.3cm}
%\end{figure}
%%
%
%%
%\begin{figure}[t]
%\begin{centering}
%\makeatletter\def\@captype{figure}\makeatother
%\subfigure[$r=100$ m.]{\includegraphics[width=3in]{three100.eps}
%\label{fig:three100}}
%\subfigure[$r=200$ m.]{\includegraphics[width=3in]{three200.eps}
%\label{fig:three200}} \vspace{-0.1cm} \caption{Performance comparison of the three proposed algorithms, where $p_i=5$ dB for all $i$.} \label{fig:three}
%\end{centering}
%\vspace{-0.3cm}
%\end{figure}
%%

%
\begin{figure}[t]
\begin{centering}
\includegraphics[scale=.55]{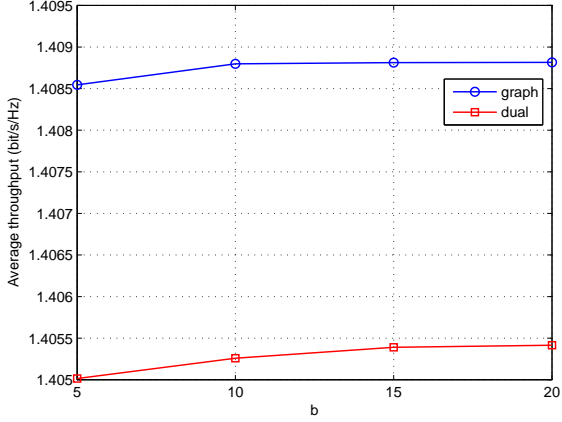}
\vspace{-0.1cm}
 \caption{Performance comparison of the proposed algorithms, where $p_i=-5$ dB for all $i$.}\label{fig:CL2015-2583-three200}
\end{centering}
\vspace{-0.3cm}
\end{figure}

We first evaluate the proposed  algorithms in Fig. \ref{fig:CL2015-2583-three200}. First, it is observed that the duality gap of the distributed algorithm really exists. However, the effect of the duality gap is negligible since the performance loss is less than $0.1\%$ compared with the globally optimal algorithm, which demonstrates the effectiveness of the distributed algorithm. Moreover, as expected, the throughput performance is increasing with the effective loads $b$ since more users can associate with the BSs.
%Finally, we observe that shorter transmitter-receiver distance (i.e., smaller $r$) leads to higher throughput since path loss dominates users' performance.

%
\begin{figure}[t]
\begin{centering}
\includegraphics[scale=.55]{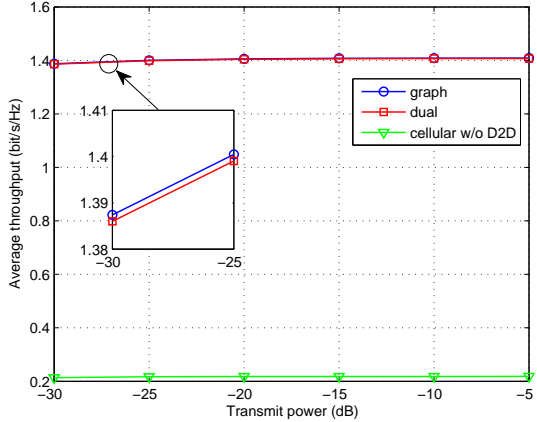}
\vspace{-0.1cm}
 \caption{Performance comparison of different algorithms with $b=10$.}\label{fig:CL2015-2583-rate_dB}
\end{centering}
\vspace{-0.3cm}
\end{figure}

Then we evaluate different algorithms in Fig. \ref{fig:CL2015-2583-rate_dB} where we set the effective loads of BSs as $b=10$. Here the traditional cellular communication without D2D mode is considered as the benchmark, which can be regarded as a special case of the proposed algorithms. We observe that average throughput is increasing with transmit power, but the performance will be bounded at the high transmit power due to the mutual interference. We also find that D2D mode dominates the performance in uplink transmission since the transmitter-receiver distances of users are usually very small and they prefer D2D mode compared to associating with BSs.
%Moreover, it is observed that the effect of cellular mode is increasing with $r$. That is, larger transmitter-receiver distance of a user means that the user is more likely to prefer associating with BSs.

%%
%\begin{figure}[t]
%\begin{centering}
%\includegraphics[scale=.65]{dual_variables.eps}
%\vspace{-0.1cm}
% \caption{Convergence performance of Algorithm 2 with $b=5$, $p_i=5$ dB and $r=100$ m. Each line represents one BS.}\label{fig:dual_variables}
%\end{centering}
%\vspace{-0.3cm}
%\end{figure}
%%

%For a given channel realization, we investigate the convergence performance of the proposed Algorithm 2 in Fig. \ref{fig:dual_variables}, where each line represents one BS's dual variable and we set $b=5$, $p_i=5$ dB and $r=100$ m. It observes that the algorithm converges with about $70$ iterations in this example.

%We also investigate the convergence performance of the proposed Algorithm 3 in Fig. \ref{fig:price} for a given channel realization, where each line represents one BS's price and we set $b=5$, $p_i=5$ dB and $r=100$ m. We find that the BSs' prices are convergent with about $60$ iterations in this example. It is also observed that the prices are updated with a non-decreasing manner.

%%
%\begin{figure}[t]
%\begin{centering}
%\includegraphics[scale=.65]{price.eps}
%\vspace{-0.1cm}
% \caption{Convergence performance of Algorithm 3 with $b=5$, $p_i=5$ dB and $r=100$ m. Each line represents one BS.}\label{fig:price}
%\end{centering}
%\vspace{-0.3cm}
%\end{figure}
%%

\section{Conclusion}

In this paper, we studied the optimal mode selection for D2D communications underlaying uplink cellular network with multiple BSs. To ensure fair and balanced utilization of resources, we imposed load balancing constraints in the BSs. We formulated the joint optimization problem of mode selection and user association as a combinatorial problem and NP-complete. We first solved this problem globally optimally in a centralized manner by using a graph based approach. To reduce complexity and signaling overhead, we further proposed a distributed algorithm which performs almost the same performance compared with the globally optimal one. Simulation results verified the proposed algorithms.

In this paper we consider the uplink case. It is also important and interesting to consider the problem in downlink case, which however needs to involve another dimensional resource allocation (like antennas, frequencies and/or slots). This will greatly complicate the problem and we would like to consider it for our future work.

\bibliographystyle{IEEEtran}
\bibliography{IEEEabrv,CL2015-2583-mode}
\end{document}